\newtheorem{theorem}{Theorem}
\newtheorem{lemma}{Lemma}
\newtheorem{remark}{Remark}
\newtheorem{definition}{Definition}
\newtheorem{proposition}{Proposition}
\newcommand{\beq}{\begin{equation}}
\newcommand{\eeq}{\end{equation}}
\newcommand{\bea}{\begin{array}}
\newcommand{\ena}{\end{array}}
\newcommand{\bds}{\begin {itemize}}
\newcommand{\eds}{\end {itemize}}
\newcommand{\bdf}{\begin{definition}}
\newcommand{\blm}{\begin{lemma}}
\newcommand{\edf}{\end{definition}}
\newcommand{\elm}{\end{lemma}}
\newcommand{\bthm}{\begin{theorem}}
\newcommand{\ethm}{\end{theorem}}
\newcommand{\bprp}{\begin{prop}}
\newcommand{\eprp}{\end{prop}}
\newcommand{\bcl}{\begin{claim}}
\newcommand{\ecl}{\end{claim}}
\newcommand{\bcr}{\begin{coro}}
\newcommand{\ecr}{\end{coro}}
\newcommand{\bquest}{\begin{question}}
\newcommand{\equest}{\end{question}}
\newcommand{\larrow}{{\larrow}}
\newcommand{\nin}{{\not \in}}
\def\urltilda{\kern -.15em\lower .7ex\hbox{\~{}}\kern .04em}
\begin{document}
\title{Deep Multi-User Reinforcement Learning for Distributed Dynamic Spectrum Access}
\author{Oshri Naparstek and Kobi Cohen
\thanks{Oshri Naparstek is with the Rafael Advanced Defense Systems Ltd., Haifa 31021, Israel. Email: oshrin@rafael.co.il}
\thanks{Kobi Cohen is with the Electrical and Computer Engineering Department, Ben-Gurion University of the Negev, Beer Sheva 8410501 Israel. Email:yakovsec@bgu.ac.il}
\thanks{This work has been submitted to the IEEE for possible publication. Copyright may be transferred without notice, after which this version may no longer be accessible.}
\thanks{A short version of this paper that introduces the algorithm and preliminary experimental results was presented at the IEEE Global Communications Conference (GLOBECOM), 2017 \cite{naparstek2017deep}.}
}
\date{}
\maketitle

\begin{abstract}
\label{sec:abstract}
We consider the problem of dynamic spectrum access for network utility maximization in multichannel wireless networks. The shared bandwidth is divided into $K$ orthogonal channels. In the beginning of each time slot, each user selects a channel and transmits a packet with a certain transmission probability. After each time slot, each user that has transmitted a packet receives a local observation indicating whether its packet was successfully delivered or not (i.e., ACK signal). The objective is a multi-user strategy for accessing the spectrum that maximizes a certain network utility in a distributed manner without online coordination or message exchanges between users. Obtaining an optimal solution for the spectrum access problem is computationally expensive in general due to the large state space and partial observability of the states. To tackle this problem, we develop a novel distributed dynamic spectrum access algorithm based on deep multi-user reinforcement leaning. Specifically, at each time slot, each user maps its current state to spectrum access actions based on a trained deep-Q network used to maximize the objective function. Game theoretic analysis of the system dynamics is developed for establishing design principles for the implementation of the algorithm. Experimental results demonstrate strong performance of the algorithm.
\end{abstract}

\def\keywords{\vspace{.5em}
{\bfseries\textit{Index Terms}---\,\relax%
}}
\def\endkeywords{\par}
\keywords
Wireless networks, dynamic spectrum access, medium access control (MAC) protocols, multi-agent learning, deep reinforcement learning.

\section{Introduction}
\label{sec:intro}

The increasing demand for wireless communication, along with spectrum scarcity, have triggered the development of efficient dynamic spectrum access (DSA) schemes for emerging wireless network technologies. A good overview of various DSA models for medium access control (MAC) design can be found in \cite{zhao2007survey}. In this paper we mainly focus on DSA in the open sharing model among users that acts as the basis for enabling a large number of users to access and share the same limited frequency band. We consider a wireless network with $N$ users sharing $K$ orthogonal channels (e.g., OFDMA). In the beginning of each time slot, each user selects a channel and transmits its data with a certain transmission probability (i.e., Aloha-type narrowband transmission). After each time slot, each user that has transmitted a packet receives a local binary observation indicating whether its packet was successfully delivered or not (i.e., ACK signal). The goal of the users is to maximize a certain network utility in a distributed manner without online coordination or exchanging messages.

\subsection{Learning Algorithms for Dynamic Spectrum Access}
\label{ssec:intro_learning}

Developing distributed optimization and learning algorithms for managing efficient spectrum access among users have attracted much attention in past and recent years (see Section \ref{ssec:related} for a detailed discussion on related work). Complete information about the network state is typically not available online for the users, which makes the computation of optimal policies intractable in general \cite{zhao2007decentralized}. While optimal structured solutions have been developed for some special cases (e.g., \cite{Ahmad_2009_Optimality, Wang_2012_Optimality, cohen2014restless} and references therein), most of the existing studies have focused on designing spectrum access protocols for specific models so that efficient (though not optimal) and structured solutions can be obtained. However, model-dependent solutions cannot effectively adapt in general for handling more complex real-world models. Model-free Q-learning was used in \cite{li2010multiagent} for Aloha-based protocol in cognitive radio networks. Handling large state space and partial observability, however, becomes inefficient under Q-learning (see Section \ref{sec:network} for details on Q-learning).

\subsection{Deep Multi-User Reinforcement Learning for Dynamic Spectrum Access}
\label{ssec:intro_deep}

\emph{Our goal is to develop a distributed learning algorithm for dynamic spectrum access that can effectively adapt for general complex real-world settings, while overcoming the expensive computational requirements due to the large state space and partial observability of the problem. We adopt a deep multi-user reinforcement learning approach to achieve this goal.}

Deep reinforcement learning (DRL) (or deep Q-learning) has attracted much attention in recent years due to its capability to provide a good approximation of the objective value (referred to as Q-value) while dealing with very large state and action spaces. In contrast to Q-learning methods that perform well for small-size models but perform poorly for large-scale models, DRL combines a deep neural network with Q-learning, referred to as Deep Q-Network (DQN), for overcoming this issue. The DQN is used to map from states to actions in large-scale models so as to maximize the Q-value (for more details on DRL and related work see Sections \ref{ssec:related} and \ref{sec:network}). In DeepMind's recently published Nature paper \cite{mnih2015human}, a DRL algorithm was developed to teach computers how to play Atari games directly from the on-screen pixels, and strong performance was demonstrated in many tested games. In \cite{foerster2016learning2}, the authors developed DRL algorithms for teaching multiple players how to communicate so as to maximize a shared utility. Strong performance was demonstrated for several players in MNIST games and the switch riddle. In recent years, there is a growing attention on using DRL methods for other various fields. Other recent studies can be found in \cite{li2017deep, puzanov2018deep} and references therein.

Due to the large state space and the partially observed nature of spectral management among wireless connected devices, we postulate that incorporating DRL methods in the design of DSA algorithms has a great potential for providing effective solutions to real-world complex spectrum access settings, which motivates the research in this paper.

\subsection{Contributions}
\label{ssec:main_results}

We focus on developing a DSA algorithm based on Aloha-type random access protocol. Aloha-based protocols are popular tools primarily because of their ease of implementation and their random access. Simple transmitters can randomly access a channel without spectrum sensing or centralized controller, as opposed to CSMA-type or central-assisted schemes. Furthermore, Aloha-based protocols are much simpler to implement in a hidden terminals environment. Finally, for low loads, Aloha-based protocols may be preferred due to their low delay.

Using DRL methods in the design of spectrum access protocols is a new research direction, motivated by recent developments of DRL in various other fields, and very little has been done in this direction so far. The proposed approach is fundamentally different from existing DRL-based methods for DSA \cite{wangdeep, wang2018deep, challita2017proactive} in the following aspects: it handles a different environment dynamics; it optimizes performance with respect to a more general network utility; and a new DQN architecture is developed with lower complexity implementations (for more details on existing DRL-based methods for DSA see Section \ref{sec:existing}). We believe that the methods developed in this paper can serve as the basis for developing distributed learning algorithms to other resource management problems as well. The contribution of this paper is threefold:

\paragraph{Algorithm development for multi-user DSA with low complexity} We develop a novel deep multi-user reinforcement learning-based algorithm that allows each user to adaptively adjust its transmission parameters with the goal of maximizing a certain network utility. The algorithm can effectively adapt to topology changes, different objectives, and different finite time-horizons. The algorithm is executed without continuing online coordination or message exchanges between users. Furthermore, spectrum sensing or central control are not used in the algorithm. While offline, we train the multi-user DQN at a central unit to maximize the objective function (in contrast to \cite{challita2017proactive}, where the DQN was trained at each base-station). Since the network state is partially observable for each user, and the dynamics is non-Markovian and determined by the multi-user actions (in contrast to \cite{wangdeep, wang2018deep} that handle the single-user case), we use Long Short Term Memory (LSTM) layer that maintains an internal state and aggregate observations over time. This gives the network the ability to estimate the true state using the past partial observations. Furthermore, we incorporate the dueling DQN method used to improve the estimated Q-value due to the occurrence of bad states regardless of the taken action \cite{wang2015dueling}. Since the experience replay method \cite{mnih2013playing}, \cite{mnih2015human}, used in \cite{wangdeep, wang2018deep} for single-user DSA, is undesirable when handling a multi-user learning for DSA due to interactions among users, we collect $M$ episodes at each iteration and create target values for all the episodes.

After completion of the training phase, the users only need to update their DQN weights by communicating with the central unit. In real-time, at each time slot, each user maps its local observation to spectrum access actions based on the trained DQN.

The proposed algorithm is very simple for implementation using simple software defined radios (SDRs). The expensive computations at the training phase are done offline by a centralized powerful unit (e.g., cloud, or network edge), while updating the DQN is rarely required (e.g., once per weeks, months, only when the environment characteristics have been significantly changed and no longer reflects the training experiences). An illustration is provided in Figure \ref{fig:MEC_CR}.
\begin{figure}[htbp]
\centering \epsfig{file=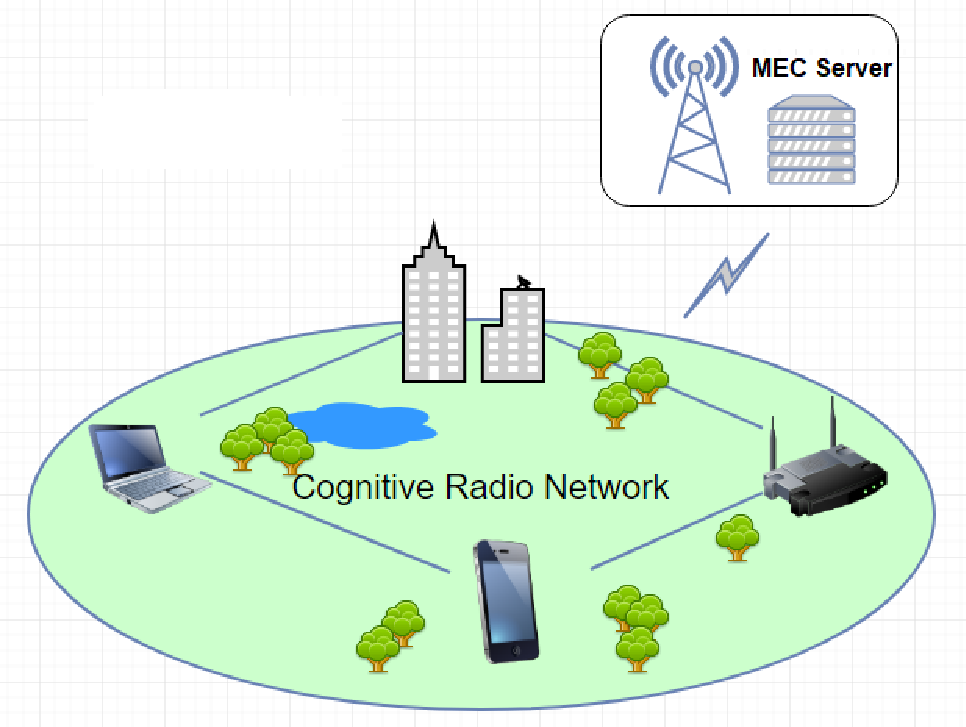,
width=0.35\textwidth}
\caption{An illustration of the network architecture. The expensive computations at the training phase are done offline by the MEC server, located with the wireless access point (e.g., base station). The SDRs update the DQN from time to time (only when the environment characteristics have been significantly changed and no longer reflects the training experiences).}
\label{fig:MEC_CR}
\end{figure}

\paragraph{Analyzing the multi-user dynamics for establishing fundamental algorithm design principles} We use game theoretic analysis in the development of the algorithm that provides us useful tools for modeling and analyzing the multi-user dynamics in Section \ref{sec:analysis}. For a non-cooperative utility, we show that distributed training leads to inefficient subgame perfect equilibria. Thus, we develop a mechanism that restricts the strategy space for all users when training the DQN, referred to as common training, so that it avoids convergence to those inefficient operating points. For a cooperative utility, we develop the first DRL-based approach for DSA that directly optimizes a global system-wide fairness utility. Since the reward for each user is no longer aggregated over time and depends on the common global utility, users receive a common global reward only at the end of the episode. However, it is well known that receiving delayed rewards decreases the training efficiency. Hence, for handling this challenge, we exploit the inherent structure of the objective function to design a reward which is aggregated over time and approximates well the system-wide global utility.

\paragraph{Experimental study} We present extensive numerical experiments for demonstrating the capability of the proposed algorithm to effectively adapt to different problem settings. Under both cooperative and non-cooperative network utilities, we observed that users effectively learn in a fully distributed manner only from their ACK signals how to access the channels so as to increase the channel throughput by reducing the number of idle time slots and collisions. Specifically, the proposed algorithm achieves up to twice the channel throughput as compared to slotted-Aloha with optimal transmission probabilities.

\section{Existing DRL-Based Methods for DSA and Other Related Work}

\subsection{Existing DRL-Based Methods for DSA}
\label{sec:existing}

Developing DRL-based methods for solving DSA problems is a new research direction, motivated by recent developments of DRL in various other fields, and few works have been done in this direction recently. We discuss next the very recent studies on this topic which are relevant to the problem considered in this paper. In \cite{wangdeep, wang2018deep}, the authors developed a spectrum sensing policy based on DRL for a single user who interacts with an external environment. The multi-user setting considered here, however, is fundamentally different in environment dynamics, network utility, and algorithm design.
In \cite{challita2017proactive}, the authors studied a non-cooperative spectrum access problem in a different setting, in which multiple agents (i.e., base-stations in their model) compete for channels and aim at predicting the future system state using LSTM layer with REINFORCE algorithm. The neural network was trained at each agent. The problem formulation in \cite{challita2017proactive} is non-cooperative in the sense that each agent aims at maximizing its own utility, while using the predicted state to reach a certain fair equilibrium point. Our algorithm and problem setting are fundamentally different. First, our algorithm uses LSTM with DQN which is different from the algorithm in \cite{challita2017proactive}. Second, in our algorithm, the DQN is trained for all users at a single unit (e.g., cloud), which is more suitable to various cognitive radio networks and Internet of Things (IoT)-based applications, in which cheap SDRs only need to rarely update their DQN weights by communicating with the central unit. Third, we are interested in both cooperative and non-cooperative settings, where fundamentally different operating points are reached depending on the network utility function. Furthermore, in \cite{challita2017proactive} the focus was on matching channels to base stations, whereas in our setting we focus on sharing the limited spectrum by a large number of users (i.e., matching might be infeasible). Other related work considered radio control and signal detection problems, in which a radio signal search environment based on Gym Reinforcement Learning was developed \cite{oshea2016deep} to approximate the cost of search, as opposed to asymptotically optimal search strategies \cite{cohen2015active, cohen2015asymptotically, huang2018active}. Other related works on the general topic of deep learning in mobile and wireless networking can be found in a very recent comprehensive survey \cite{zhang2018deep}.

\subsection{Other Related Work}
\label{ssec:related}

Related works on learning algorithms for DSA have mainly focused on model-dependent settings or myopic objectives so that tractable and structured solutions can be obtained. The problem has been widely studied under multi-armed bandit (MAB) formulations (and variations), in which the channels are represented as arms that the user aims to explore to receive high rewards (e.g., rates). Related works can be found in \cite{Ahmad_2009_Optimality, Liu_2010_Indexability, Wang_2012_Optimality, cohen2014restless} (and references therein) under the Bayesian setting and in \cite{Tekin_2012_Approximately, Liu_2013_Learning, avner2016multi, gafni2018learningISIT, gafni2018learning} (and references therein) under the non-Bayesian settings.
Another set of related work on the multi-user case was studied from game theoretic and congestion control (\cite{han2005fair, menache2008rate, candogan2009competitive, menache2011network, law2012price, wu2013fasa, singh2016combined, cohen2013game, cohen2016distributedToN, cohen2017distributed} and references therein), matching theory (\cite{leshem2012_multichannel, naparstek2014fully, naparstek2014distributed, gu2015matching, mochaourab2015distributed} and references therein), and graph coloring (\cite{wang2005list, wang2009improved, checco2014fast, checco2013learning} and references therein) perspectives.
Game theoretic aspects of the problem have been investigated from both non-cooperative (i.e., each user aims at maximizing an individual utility) \cite{menache2008rate, candogan2009competitive, singh2016combined, cohen2016distributedToN, cao2018distributed}, and cooperative (i.e., each user aims at maximizing a system-wide global utility) \cite{han2005fair, leshem2006bargaining, cohen2017distributed, bistritz2018approximate} settings. Matching algorithms have focused on allocating channels to users so that a certain utility is maximized (e.g., user sum rate) \cite{leshem2012_multichannel, naparstek2014fully, gu2015matching}. Graph coloring formulations have concerned with modeling the spectrum access problem as a graph coloring problem, in which users and channels are represented by vertices and colors, respectively. Thus, coloring vertices such that two adjacent vertices do not share the same color is equivalent to allocating channels such that interference between neighbors is being avoided (see \cite{wang2005list, wang2009improved, checco2014fast, checco2013learning} and references therein for related works).
Finally, all these studies mainly focused on model and objective-dependent problem settings, often require more complex implementations (e.g., carrier sensing, wideband monitoring), and the solutions are model-dependent and cannot effectively adapt in general for handling more complex real-world models.

\section{Network Model and Problem Statement}
\label{sec:network}

We consider a wireless network consisting of a set $\mathcal{N}=\left\{1, 2, ..., N\right\}$ of users and a set $\mathcal{K}=\left\{1, 2, ..., K\right\}$ of shared orthogonal channels (i.e., subbands). The users transmit over the shared channels using a random access protocol. At each time slot, each user is allowed to choose a single channel for transmission with a certain transmission probability (i.e., Aloha-type narrowband transmission). We assume that users are backlogged, i.e., all users always have packets to transmit. Transmission on channel $k$ is successful if only a single user transmits over channel $k$ in a given time slot. Otherwise, a collision occurs. Note that in the case where $N\leq K$, channel-user allocation is feasible, in which all users can always transmit and avoid collisions. The proposed algorithm in this paper applies to both $N\leq K$, and $N>K$ cases. After each time slot (say $t$), in which each user (say $n$) has attempted to transmit a packet, it receives a binary observation $o_n(t)$, indicating whether its packet was successfully delivered or not (i.e., ACK signal). If the packet has been successfully delivered, then $o_n(t)=1$. Otherwise, if the transmission has failed (i.e., a collision occurred), then $o_n(t)=0$.

Let \vspace{0.0cm}
\beq
\displaystyle a_n(t)\in\left\{0, 1, ..., K\right\} \vspace{0.0cm}
\eeq
be the action of user $n$ at time slot $t$, where $a_n(t)=0$ refers to the case in which user $n$ chooses not to transmit a packet at time slot $t$ (to reduce the congestion level for instance), and $a_n(t)=k$, where $1\leq k\leq K$, refers to the case in which user $n$ chooses to transmit a packet on channel $k$ at time slot $t$.
We define \vspace{0.0cm}
\beq
\displaystyle a_{-n}(t)=\left\{a_i(t)\right\}_{i\neq n} \vspace{0.0cm}
\eeq
as the action profile for all users except user $n$ at time slot $t$. We consider a distributed setting without online coordination or message exchanges between users used to manage the spectrum access. As a result, the network state at time $t$ (i.e., $a_{-n}(t)$) is only partially observed by user $n$ through the local signal $o_n(t)$.
The history $\mathcal{H}_n(t)$ of user $n$ at time $t$ is defined by the set of all actions and observations up to time $t$: \vspace{0.0cm}
\beq
\displaystyle \mathcal{H}_n(t)=\left(\left\{a_n(i)\right\}_{i=1}^{t},\left\{o_n(i)\right\}_{i=1}^{t}\right). \vspace{0.2cm}
\eeq

\begin{definition}
A strategy $\sigma_n(t)$ of user $n$ at time $t$ is a mapping from history $\mathcal{H}_n(t-1)$ to a probability mass function over actions $\left\{0, 1, ..., K\right\}$. The time series vector of strategies (or \emph{policy}) for user $n$ is denoted by $\boldsymbol{\sigma}_n=\left(\sigma_n(t), t=1, 2, ...\right)$. A strategy profile of all users except user $n$ is denoted by $\boldsymbol{\sigma}_{-n}=\left\{\boldsymbol{\sigma}_i\right\}_{i\neq n}$. A strategy profile of all users is denoted by $\boldsymbol{\sigma}=\left\{\boldsymbol{\sigma}_i\right\}_{i=1}^n$.
\vspace{0.2cm}
\end{definition}

For convenience, we often write strategy $\sigma_n(t)$ as a $1\times K$ row vector: \vspace{0.0cm}
\beq
\displaystyle\sigma_n(t)=\left(p_{n, 0}(t), p_{n, 1}(t), ..., p_{n, K}(t)\right), \vspace{0.0cm}
\eeq
where
\beq
\displaystyle p_{n, k}(t)=\Pr\left(a_n(t)=k\right), \vspace{0.0cm}
\eeq
is the probability that user $n$ takes action $a_n(t)=k$ at time $t$. Let $r_n(t)$ be a reward that user $n$ obtains at the beginning of time slot $t$. The reward depends on user $n$'s action $a_n(t-1)$ and other users' actions $a_{-n}(t-1)$ (i.e., the unknown network state that user $n$ aims to learn). The reward can be viewed as a function of the achievable data rate on the wireless channel (say channel $k$), i.e., $B\log_2\left(1+\mbox{SNR}_n(k)\right)$, where $B$ is the channel bandwidth, and SNR$_n(k)$ is the received SNR of user $n$ on channel $k$.
Let \vspace{0.0cm}
\beq
\label{eq:reward}
\displaystyle R_n=\sum_{t=1}^{T}\gamma^{t-1}r_n(t) \vspace{0.0cm}
\eeq
be the accumulated discounted reward, where $0\leq\gamma\leq 1$ is a discounted factor, and $T$ is the time-horizon of the game. We often set $\gamma=1$, or $\gamma<1$ when $T$ is bounded or unbounded, respectively. The objective of each user (say $n$) is to find a strategy $\boldsymbol{\sigma}_n$ that maximizes its expected accumulated discounted reward: \vspace{0.0cm}
\beq
\label{eq:objective}
\displaystyle\max_{\boldsymbol{\sigma}_n} \;\;
\textbf{E}\left[R_n(\boldsymbol{\sigma}_n, \boldsymbol{\sigma}_{-n})\right], \vspace{0.0cm}
\eeq
where $\textbf{E}\left[R_n(\boldsymbol{\sigma}_n, \boldsymbol{\sigma}_{-n})\right]$ denotes the expected accumulated discounted reward when user $n$ performs strategy $\boldsymbol{\sigma}_n$ and the rest of the users perform strategy profile $\boldsymbol{\sigma}_{-n}$.

\begin{remark}
It should be noted that we mainly focus on DSA in the open sharing model \cite{zhao2007survey}. Therefore, we often do not assume that there are primary and secondary users in the networks. Nevertheless, we can extend the model to handle these situations by adding external processes (i.e., which are not affected by other users' actions) to model the primary users activities. As a result, the network state that user $n$ aims at inferring at time $t$ is given by $(a_{-n}(t), a_{p}(t))$, where $a_{p}(t)$ is the action profile for all primary users at time $t$. In Section \ref{sec:sim}, we demonstrate strong performance of the proposed algorithm in the presence of primary users as well.
\end{remark}

We are interested in developing a model-free distributed learning algorithm to solve (\ref{eq:objective}) that can effectively adapt to topology changes, different objectives, different finite time-horizons (in which solving dynamic programming becomes very challenging, or often impossible for large $T$), etc. Computing an optimal solution, however, is a combinatorial optimization problem with partial state observations which is mathematically intractable as the network size increases \cite{zhao2007decentralized}. Thus, we adopt a DRL approach due to its capability to provide good approximate solutions while dealing with a very large state and action spaces. In the next paragraph we first describe the basic idea of Q-learning and DRL. We then develop the proposed algorithm that adopts a deep multi-user reinforcement learning approach for DSA design in Section \ref{sec:algorithm}.

\noindent
\textbf{Background on Q-learning and Deep Reinforcement Learning (DRL):} Q-learning is a reinforcement learning method that aims at finding good policies for dynamic programming problems. It has been widely applied in various decision making problems, primarily because its ability to evaluate the expected utility among available actions without requiring prior knowledge about the system model, and its ability to adapt when stochastic transitions occur \cite{watkins1992q}. The algorithm was originally designed for a single agent who interacts with a fully observable Markovian environment (in which convergence to the optimal solution is guaranteed under some regularity conditions in this case). It has been widely applied for more involved settings as well (e.g., multi-agent, non-Markovian environment) and demonstrated strong performance, although convergence to the optimal solution is open in general under these settings. Assume first that the network state $s_n(t)=a_{-n}(t)$ is fully observable by user $n$. By applying Q-learning to our setting, the algorithm updates a Q-value at each time $t$ for each action-state pair as follows: \vspace{0.0cm}
\beq
\bea{l}
\label{eq:Qlearning}
\displaystyle
Q_{t+1}\left(s_n(t), a_n(t)\right)=Q_t\left(s_n(t),a_n(t)\right)
\vspace{0.2cm}\\\hspace{1cm}\displaystyle
+\alpha\left[r_n(t+1)+\gamma\max_{a_n(t+1)}Q_t\left(s_n(t+1),a_n(t+1)\right)\right.
\vspace{0.2cm}\\\displaystyle\left.\hspace{5cm}
-Q_t\left(s_n(t),a_n(t)\right)\right], \vspace{0.0cm}
\ena
\eeq
where
\beq
\displaystyle r_n(t+1)+\gamma\max_{a_n(t+1)}Q_t\left(s_n(t+1),a_n(t+1)\right) \vspace{0.0cm}
\eeq
is the learned value obtained by getting reward $r_n(t+1)$ after taking action $a_n(t)$ in state $s_n(t)$, moving to next state $s_n(t+1)$, and then taking action $a_n(t+1)$ that maximizes the future Q-value seen at the next state. The term $Q_t\left(s_n(t),a_n(t)\right)$ is the old learned value. Thus, the algorithm aims at minimizing the Time Difference (TD) error between the learned value and the current estimate value. The learning rate $\alpha$ is set to $0\leq\alpha\leq 1$, where typically is set close to zero. When the problem is partially observable, the state is set to the history, i.e., $s_n(t)=\mathcal{H}_n(t)$ in our case (or a sliding window history when the problem size is too large). Throughout the paper we often remove the subscript $t$ to simplify the presentation.

While Q-learning performs well when dealing with small action and state spaces, it becomes impractical when the problem size increases for mainly two reasons: (i) A stored lookup table of $Q$-values for all possible state-action pairs is required which makes the storage complexity intolerable for large-scale problems. (ii) As the state space increases, many states are rarely visited, which significantly decreases performance.

In recent years, a great potential was demonstrated by DRL methods that combine deep neural network with Q-learning, referred to as Deep Q-Network (DQN), for overcoming these issues. Using DQN, the deep neural network maps from the (partially) observed state to an action, instead of storing a lookup table of Q-values. Furthermore, large-scale models can be represented well by the deep neural network so that the algorithm has the ability to preserve good performance for very large-scale models. Although convergence to the optimal solution of DRL is an open question (even for a single agent), strong performance has been demonstrated in various fields as compared to other approaches. A well known single-player DRL-based algorithm has been developed in DeepMind's recently published Nature paper \cite{mnih2015human}, for teaching computers how to play Atari games directly from the on-screen pixels, in which strong performance has been demonstrated in many tested games. For other recent developments see Section \ref{ssec:related}.

\section{The Proposed Deep Q-Learning for Spectrum Access (DQSA) Algorithm}
\label{sec:algorithm}

Direct computation of the optimal channel allocation and transmission probabilities for the multi-channel spectrum access problem (\ref{eq:objective}) is a combinatorial optimization problem with partial state observations which is mathematically intractable as the network size increases \cite{zhao2007decentralized}. Furthermore, it requires online centralized computations. Iterative algorithms that approximate (\ref{eq:objective}) have been mainly developed for specific problem settings, where obtaining a global network utility generally requires message exchanges between users (e.g., \cite{cohen2017distributed}). In this section, we develop the proposed DQSA algorithm based on deep multi-user reinforcement learning to solve (\ref{eq:objective}). The DQSA algorithm applies for general large and complex settings and does not require online coordination or message exchanges between users.

We first present in Section \ref{ssec:architecture} the proposed architecture of the DQN used in the DQSA algorithm. In Section \ref{ssec:training} we present the offline algorithm used for training the DQN, and in Section \ref{ssec:online} we describe the online learning algorithm for the distributed random access, in which every user operates in a fully distributed manner by using the trained DQN. The specific setting of the objective function used for training the DQN depends on the desired performance as will be discussed in Section \ref{sec:analysis}. Specifically, in Section \ref{sec:analysis} we establish design principles for implementing DQSA based on a game theoretic analysis of the operating points of (\ref{eq:objective}) under both cooperative and competitive utility functions.

\subsection{Architecture of the Proposed Multi-User DQN Used in DQSA Algorithm}
\label{ssec:architecture}
In this section, we describe the proposed architecture for the multi-user DQN used in DQSA algorithm to solve the DSA problem. An illustration of the DQN is presented in Fig. \ref{figure_nnet}.

\noindent
  \emph{1) Input Layer:} The input $\textbf{x}_n(t)$ to the DQN is a vector of size $2K+2$. The first $K+1$ input entries indicate the action (i.e., selected channel) taken at time $t-1$. Specifically, if the user has not transmitted at time slot $t-1$, the first entry is set to $1$ and the next $K$ entries are set to $0$. If the user has chosen channel $k$ for transmission at time $t-1$ (where $1\leq k\leq K$), then the $(k+1)^{th}$ entry is set to $1$ and the rest $K$ entries are set to $0$. The following $K$ input entries are the capacity of each channel (i.e., the packet transmission rate over a channel conditioned on the event that the channel is free, which is proportional to the channel bandwidth). The last input is $1$ if ACK signal has been received. Otherwise, if transmission has failed or no transmission has been executed, it is set to $0$.\vspace{0.0cm}\\

\begin{figure}[htbp]
\centering \epsfig{file=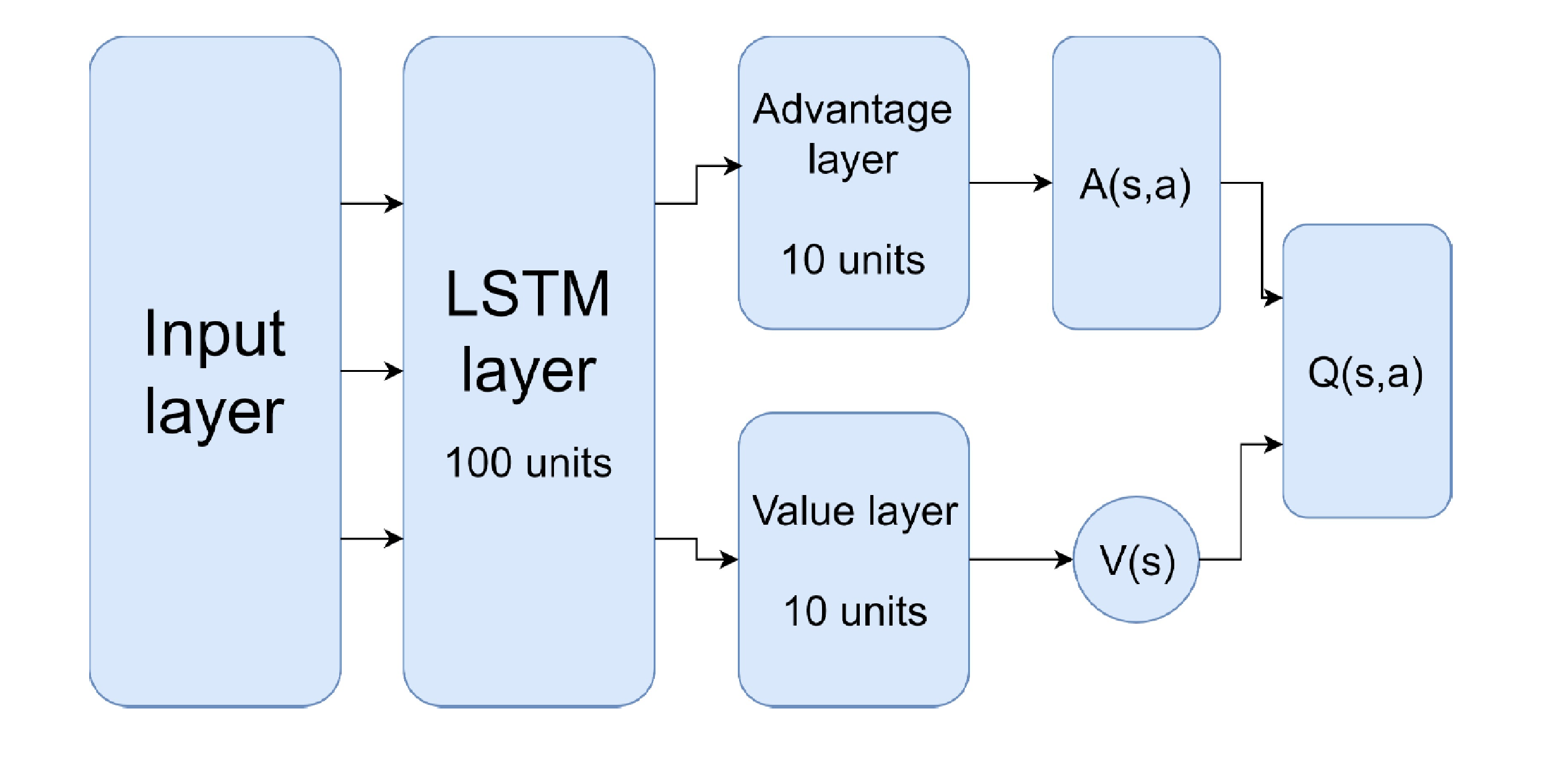,
width=0.48\textwidth}
\caption{An illustration of the architecture of the proposed multi-user DQN used in DQSA algorithm.}
\label{figure_nnet}
\end{figure}

\noindent
  \emph{2) LSTM Layer:} Since the network state is partially observable for each user, and the dynamics is non-Markovian and determined by the multi-user actions, classical DQNs do not perform well in this setting. Thus, we add an LSTM layer (~\cite{hausknecht2015deep}) to the DQN that maintains an internal state and aggregate observations over time. This gives the network the ability to estimate the true state using the history of the process. This layer is responsible of learning how to aggregate experiences over time.\vspace{0.2cm}\\
  \emph{3) Value and Advantage Layers:} Another improvement that we incorporate is the use of dueling DQN, as suggested in \cite{wang2015dueling}. The intuition behind this architecture lies in the fact that there is an observability problem in DQN. There are states which are good or bad regardless of the taken action. Hence, it is desirable to estimate the average Q-value of the state which is called the value of the state $V(s_n(t))$ independently from the advantage of each action. Thus, when we input $\textbf{x}_n(t)$ to the DQN with dueling, the Q-value for selecting action $a_n(t)$ at time $t$ is updated by: \vspace{0.0cm}
\beq
\displaystyle Q(a_n(t))\leftarrow V+A(a_n(t)). \vspace{0.0cm}
\eeq
Note that both $V$ and $A(a_n(t))$ depend on the state $s_n(t)$ (which is hidden and mapped by the DQN from the history). The term $V$ is the value of the state and it estimates the expected Q-value of the state with respect to the taken action. The term $A(a_n(t)$ is the advantage of each action and it estimates the Q-value minus its expected value. In practice, one way to evaluate $A(a_n(t)$ is to subtract the maximal value of the state with respect to the taken actions from the Q function. Another way is to subtract the average value of the state with respect to the taken actions from the Q function. Here, we use the latter method \cite{wang2015dueling}.
\vspace{0.2cm}\\
  \emph{4) Block output layer:} The output of the DQN is a vector of size $K+1$. The first entry is the estimated Q-value if the user will choose not to transmit at time $t$. The $(k+1)^{th}$ entry, where $1\leq k\leq K$, is the estimated Q-value for transmitting on channel $k$ at time $t$.\vspace{0.2cm}\\
  \emph{5) Double Q-learning:} The max operator in standard Q-learning and DQN (see (\ref{eq:Qlearning})) uses the same values to both selecting and evaluating an action. Thus, it tends to select overestimated values which degrades performance. Hence, when training the DQN, we use double Q-learning \cite{van2016deep} used to decouple the selection of actions from the evaluation of Q-values. Specifically, we use two neural networks, referred to as DQN$_1$ and DQN$_2$. DQN$_1$ is used for choosing actions and DQN$_2$ is used to estimate the Q-value associated with the selected action.

\subsection{Training the DQN:}
\label{ssec:training}

The DQN is trained for all users at a central unit in an offline manner.
We train the DQN as follows:\vspace{-0.2cm}

\noindent
\underline{\hspace{8.85cm}}\\
\underline{\textbf{DQSA Algorithm:} Training Phase\hspace{3.8cm}}
\begin{enumerate}
  \item \textbf{for} iteration $i=1, ..., R$ \textbf{do}\vspace{0.1cm}
  \item \hspace{0.2cm} \textbf{for} episode $m=1, ..., M$ \textbf{do}\vspace{0.1cm}
  \item \hspace{0.4cm}     \textbf{for} time-slot $t=1, ..., T$ \textbf{do}\vspace{0.1cm}
  \item \hspace{0.6cm}     \textbf{for} user $n=1, ..., N$ \textbf{do}\vspace{0.1cm}
  \item \hspace{0.8cm}         Observe an input $\textbf{x}_n(t)$ and feed it into the neural
                               \hspace*{0.825cm} network DQN$_1$
  \item \hspace{0.8cm}         Generate an estimation of the Q-values $Q(a)$ for
                               \hspace*{0.77cm} all available actions $a\in\left\{0, 1, ..., K\right\}$ by the
                               \hspace*{0.82cm} neural network \vspace{0.1cm}
  \item \hspace{0.8cm}         Take action $a_n(t)\in\left\{0, 1, ..., K\right\}$ (according to
                               \hspace*{0.82cm} (\ref{eq:P_a})) and obtain a reward
                               $r_n(t+1)$\vspace{0.1cm}
  \item \hspace{0.8cm}         Observe an input $\textbf{x}_n(t+1)$ and feed it into both
                               \hspace*{0.82cm} neural networks DQN$_1$ and DQN$_2$
  \item \hspace{0.8cm}         Generate estimations of the Q-values $\widetilde{Q}_1(a)$ and\\
                                \hspace*{0.67cm}  $\widetilde{Q}_2(a)$, respectively, for all actions
                               $a\in\hspace*{0.82cm}\left\{0, 1, ..., K\right\}$ by the neural networks \vspace{0.1cm}
  \item \hspace{0.8cm}         Form a target vector for the training by replacing
                                \hspace*{0.82cm} the $a_n(t)$ entry by:\vspace{-0.3cm}
\begin{center}
$\bea{l}
\hspace*{0.82cm}\displaystyle Q(a_n(t))\leftarrow r_n(t+1)+\widetilde{Q}_2\left(\arg\max_a\left(\widetilde{Q}_1(a)\right)\right)\vspace{0.1cm}
\ena$
\end{center}
  \item \hspace{0.6cm} \textbf{end for}\vspace{0.1cm}
  \item \hspace{0.4cm} \textbf{end for}\vspace{0.1cm}
  \item \hspace{0.2cm} \textbf{end for}\vspace{0.1cm}
  \item Train DQN$_1$ with inputs $\textbf{x}$s and outputs $Q$s.\vspace{0.1cm}
  \item Every $\ell$ iterations set $Q_2\leftarrow Q_1$.\vspace{0.1cm}
  \item \textbf{end for}\vspace{-0.3cm}
\end{enumerate}
\underline{\hspace{8.85cm}}\vspace{0.1cm}

\noindent
In our experiments, we repeated the outer loop for several thousands iterations until convergence, and $\ell$ was set to $5$. Note that unlike \cite{mnih2013playing, mnih2015human}, in which experience replay was used in the single-agent case to learn from past observations, in the multi-user case considered here such learning is undesirable due to interactions among users. Hence, we collect the $M$ episodes at each iteration and create target values for all the episodes.

\subsection{Online Learning: Distributed Random Access using DQN:}
\label{ssec:online}

The training phase is rarely required to be updated by the central unit (only when the environment characteristics have been significantly changed and no longer reflects the training experiences). Users' SDRs only need to update their DQN weights by communicating with the central unit. In real-time, each user (say $n$) makes autonomous decisions in online and distributed manners using the trained DQN, to learn efficient spectrum access policies from its ACK signals only: \vspace{0.2cm}\\
1) At each time slot $t$, obtain observation $o_n(t)$ and feed input $\textbf{x}_n(t)$ to the trained DQN$_1$. Output Q-values $Q(a)$ are generated by DQN$_1$ for all available actions $a\in\left\{0, 1, ..., K\right\}$. \vspace{0.2cm}\\
2) Play strategy $\sigma_n(t)$ as follows: Draw action $a_n(t)$ according to the following distribution: \vspace{0.0cm}
\beq
\bea{l}
\label{eq:P_a}
\displaystyle \Pr\left(a_n(t)=a\right)=\frac{\left(1-\alpha \right)e^{\beta Q\left(a\right)}}{\displaystyle\sum_{\tilde{a}\in\left\{0, 1, ..., K\right\}} e^{\beta Q\left(\tilde{a}\right)}}+\frac{\alpha}{K+1}\vspace{0.0cm}\\\hspace{4.5cm}
\displaystyle \forall a\in\left\{0, 1, ..., K\right\}, \vspace{0.0cm}
\ena
\eeq
for small $\alpha>0$, and $\beta$ is the temperature. Note that (\ref{eq:P_a}) balances between the softmax and $\epsilon$-greedy strategies, known as Exp3 strategy \cite{auer1995gambling}. In practice, $\alpha$ is small and we take it to zero with time, so that the algorithm becomes more greedy with time in terms of selecting actions with high estimated Q-values.
The game is played over a time-horizon of $T$ time slots.

\subsection{Computational Complexity:}
\label{ssec:computational}

The number of multiplications through the DQN with $G$ layers, in which $\tilde{K}$ is the size of the input layer which is proportional to the number of channels, and $d_g$ is the number of units in the $g$'th layer, is given by $D\triangleq\tilde{K}d_1+\sum_{g=1}^{G-1} d_g d_{g+1}$. Therefore, the computational complexity in real-time for each user is given by $O(D)$ at each time step. The expensive computational complexity is only done at the offline training phase. The computational complexity of the forward and back propagation for one sample is $O(D)$. The training complexity for one minibatch of $M$ episodes with $T$ time-steps and $N$ users is given by $O(MTND)$. This is done over $I$ iterations until convergence, which results in computational complexity of order $O(IMTND)$ in the training phase.

As explained and illustrated in Section \ref{ssec:main_results}, the proposed algorithm is very simple for implementation using simple SDRs. The expensive computations at the training phase can be done offline by a centralized powerful unit (e.g., using MEC settings), where updating the DQN is rarely required.

\section{Analysis of the System Dynamics with Different Utility Functions}
\label{sec:analysis}

Since users take autonomous actions when operating the spectrum access, it is convenient to model the network dynamics from a game theoretic perspective, which is used in this section. Since training the DQN with different objective functions might lead to significantly different operating points of the system, we are interested in establishing efficient design principles for the implementation of the DQSA algorithm. We investigate both non-cooperative and cooperative utilities of the system. We first define the Nash equilibrium point as a strategy profile for all users, in which there is no incentive for any user to unilaterally deviate from it. The users dynamics in this section is referred to as a multichannel random access game. \vspace{0.2cm}

\begin{definition}
A Nash equilibrium (NE) for the multichannel random access game is a strategy profile $\boldsymbol{\sigma}^*=(\boldsymbol{\sigma}_n^*,\boldsymbol{\sigma}_{-n}^*)$, such that
\beq
\label{eq:NEP}
\bea{l}
\displaystyle R_n\left(\boldsymbol{\sigma}_n^*,\boldsymbol{\sigma}_{-n}^*\right)\geq
\displaystyle R_n(\tilde{\boldsymbol{\sigma}}_n,\boldsymbol{\sigma}_{-n}^*)\;,\;\;\; \forall n \;, \forall \tilde{\boldsymbol{\sigma}}_n. \vspace{0.2cm}
\ena
\eeq
\end{definition}

A refinement of a NE is a subgame perfect equilibrium (SPE), which is a strategy profile that obeys a NE for each subgame.\vspace{0.2cm}
\begin{definition}
A subgame perfect equilibrium (SPE) for the multichannel random access game is a strategy profile $\boldsymbol{\sigma}^*$, if for any history $\left\{\mathcal{H}_n(t-1)\right\}_{n=1}^N$ for all $t$, the induced continuation strategy at times $t, t+1, ..., T$ is a NE of the continuation game that starts at time $t$ following history $\left\{\mathcal{H}_n(t-1)\right\}_{n=1}^N$. \vspace{0.2cm}
\end{definition}

NEs and SPEs describe operating points which are stable in terms of local efficiency. Specifically, no user has an incentive to unilaterally deviate from its current strategy given the current system state. However, these operating points might be highly inefficient in terms of the reward that users can obtain by cooperating. Thus, we next define efficient operating points in terms of Pareto optimality. \vspace{0.2cm}
\begin{definition}
A NE $\boldsymbol{\sigma}^*$ is Pareto-optimal if no strategy profile can improve the reward of one user without decreasing the reward of at least one other user. \vspace{0.2cm}
\end{definition}

Next, we analyze the operating points of the system under different utility functions. We will use this analysis for establishing design principles for the setting of DQSA algorithm used to bring the system to operate in efficient operating points.

\subsection{Competitive Reward Maximization}
\label{ssec:competitive}

The first optimization problem that we investigate is concerned with the case in which each user aims at maximizing its own rate. Specifically, let $\textbf{1}_n(t)$ be the indicator function, where $\textbf{1}_n(t)=1$ if user $n$ has successfully transmitted a packet at time slot $t$, and $\textbf{1}_n(t)=0$ otherwise. Let \vspace{0.0cm}
\beq
\label{eq:reward_individual}
\displaystyle r_n(t)=\textbf{1}_n(t-1). \vspace{0.0cm}
\eeq
As a result, by substituting (\ref{eq:reward_individual}) in (\ref{eq:reward}) each user (say $n$) aims to maximize the total number of its own successful transmissions (i.e., \emph{individual rate}). Next, we show that equilibrium points of competitive rate maximization are efficient when $N\leq K$, but might be highly inefficient when $N>K$. \vspace{0.2cm}

\begin{theorem}
\label{th:competitive_NE}
Set $r_n(t)$ as in (\ref{eq:reward_individual}).  Then , the following statements hold:\vspace{0.0cm}\\
1) Assume that $N\leq K$. Then, the following strategy profile is a SPE: (i) $p_{n,0}(t)=0\;\forall n, t$, (ii) $\sum_{k=1}^K p_{n,k}(t)=1 \;\forall n, t$, and (iii) for all $t$, if $p_{n,k}(t)>0$ for any $k$, then $p_{n',k}(t)=0$ for $n'\neq n$.
\vspace{0.0cm}\\
2) Assume that $N> K$, and assign channel $k_n$ for any user $n$, such that $k_n\in\left\{1, 2, ..., K\right\}$, and $\left\{1, 2, ..., K\right\}\subseteq\bigcup_{n=1}^N k_n$. Then, for any such assignment the following strategy profile is a SPE: (i) $p_{n,0}(t)=0\;\forall n, t$, (ii) $p_{n,k_n}(t)=1\;\forall n, t$. \vspace{0.0cm}
\end{theorem}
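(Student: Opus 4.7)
My plan is to verify each part by directly computing every user's per-slot reward under the prescribed profile and showing that, in any subgame (i.e.\ after any history $\{\mathcal{H}_n(t-1)\}_{n=1}^N$), no unilateral deviation can improve the expected continuation reward. Because $r_n(t)=\mathbf{1}_n(t-1)\in\{0,1\}$ and $R_n=\sum_{t}\gamma^{t-1}r_n(t)$ is additive, it will suffice to argue that the prescribed action weakly maximizes $\mathbf{E}[r_n(t+1)\mid \mathcal{H}_n(t)]$ at every $t$ against $\boldsymbol{\sigma}_{-n}^{*}$; summing over $t$ (or invoking the one-shot deviation principle) then yields the SPE condition.

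For part~1 ($N\le K$), condition~(iii) says that at every time $t$ the supports $\{k:p_{n,k}(t)>0\}$ are pairwise disjoint across users, while (i)--(ii) say every user transmits on some channel almost surely. Hence, conditional on any realization of the prescribed randomization, user $n$ transmits on a channel that no other user touches at time $t$, so $\mathbf{1}_n(t)=1$ almost surely and $r_n(t+1)=1$ for all $n,t$. Since $1$ is the maximum possible per-slot payoff, $\mathbf{E}[R_n(\boldsymbol{\sigma}_n^{*},\boldsymbol{\sigma}_{-n}^{*})]=\sum_{t=1}^{T}\gamma^{t-1}$ attains the absolute upper bound, and no $\tilde{\boldsymbol{\sigma}}_n$ can exceed it. Because this argument uses only properties (i)--(iii) at times $\ge t$, and these continue to hold in every subgame, the profile is a SPE.

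For part~2 ($N>K$), the profile is pure and history-independent: $a_n(t)=k_n$ for all $t$. I would analyse the three possible deviations for user $n$ at an arbitrary history and time. If $n$ deviates to $a=0$, the packet is not transmitted and $r_n(t+1)=0$. If $n$ deviates to some $a=k'\ne k_n$, then by hypothesis $\bigcup_m k_m\supseteq\{1,\dots,K\}$, so there exists $n'\ne n$ with $k_{n'}=k'$ transmitting deterministically on channel $k'$, hence a collision occurs and $r_n(t+1)=0$. If $n$ sticks with $a=k_n$, then $r_n(t+1)\in\{0,1\}$ depending on whether some other user shares $k_n$; in either case the deviation is not strictly better. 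Thus the prescribed action weakly maximizes per-slot reward at every history, and since opponents' actions are history-independent constants, the same verification passes in every continuation game, giving a SPE.

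The only non-routine step, and the one I would write most carefully, is the reduction from ``prescribed action maximizes per-slot reward at every $t$'' to ``profile is a SPE of the full dynamic game.'' In part~1 this is automatic because the proposed profile attains the pointwise upper bound on $r_n$, so no intertemporal strategy can do better. In part~2 one must note that because $\boldsymbol{\sigma}_{-n}^{*}$ is deterministic and independent of $n$'s action, any joint deviation $\tilde{\boldsymbol{\sigma}}_n$ decomposes into independent per-slot choices whose expected payoffs are bounded by the pointwise maximum computed above; summing over $t$ with weights $\gamma^{t-1}\ge 0$ completes the argument. I expect this bookkeeping to be the only delicate part; the rest is a direct case analysis.
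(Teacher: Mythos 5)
Your proof is correct and follows essentially the same route as the paper: part 1 by observing the profile attains the pointwise maximum reward $r_n(t)=1$ for every user at every slot, and part 2 by the same case analysis (any deviation to another channel collides with the deterministic occupant guaranteed by the coverage condition, and not transmitting yields zero). Your explicit appeal to the one-shot deviation reduction is a small bookkeeping refinement the paper leaves implicit, but the substance is identical.
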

\begin{proof}
We start by proving the first statement. Note that the strategy profile described in the statement avoids collisions among users by transmitting on orthogonal channels at each time slot (condition (iii)). The strategy is feasible since $N\leq K$. By conditions (i) and (ii), each user surely receives $r_n(t)=1$ at each time slot. As a result, no user has an incentive to switch to a different channel or reduce its transmission probability since its individual rate will not increase. Since this argument holds for all $t$ independent of the history, the strategy profile described in Statement $1$ is a SPE.

Next, we prove the second statement. Since $N>K$, then there exists at least one channel $k\in\left\{1, ..., K\right\}$ which is assigned to at least two users (pigeonhole principle) under the strategy profile described in the statement. Let $\mathcal{K}_c$ be the set of all channels that are assigned to at least two users. Since $p_{n,k_n}(t)=1\;\forall n, t$, then:
\beq
\displaystyle \mbox{$R_n=0$, for all $n$ such that $k_n\in\mathcal{K}_c$.} \vspace{0.0cm}
\eeq
On the other hand,
\beq
\displaystyle \mbox{$R_n=\sum_{t=1}^{T}\gamma^{t-1}$ for all $n$ such that $k_n\nin\mathcal{K}_c$.} \vspace{0.0cm}
\eeq
Next, we show that no user has an incentive to switch to a different channel or reduce its transmission probability. Consider first user $n$ such that $k_n\in\mathcal{K}_c$. Since $\left\{1, 2, ..., K\right\}\subseteq\bigcup_{n=1}^N k_n$ by the condition (i.e., every channel is assigned to at least one user), and $p_{n,k_n}(t)=1\;\forall n, t$, then switching to a different channel or reducing its transmission probability, still results in getting $r_n(t)=0$ for all $t$. Next, consider user $n$ such that $k_n\nin\mathcal{K}_c$. Under the current strategy profile its individual reward is maximized, and it has no incentive to deviate from it. As a result, the strategy profile described in Statement $2$ is a SPE.
\end{proof}

Theorem \ref{th:competitive_NE} implies that when $N>K$, the equilibrium points of the system might be highly inefficient. In fact, any learning dynamics among users in which users can update sequentially their transmission probability to increase their individual rate, will result in increasing the transmission probability close to $1$ (since every user has an incentive to increase its rate by increasing the transmission probability as long as the channel yields a positive capacity).
To avoid the situation in which users keep increasing their transmission probability to increase their rates, we develop a mechanism that restricts their strategy space when training the DQN, as described below.\vspace{0.2cm}

\begin{definition}
We say that DQSA algorithm is implemented using a common training, when the Q values in the training phase (see Section \ref{ssec:training}) are estimated under the implicit assumption that all users use the same protocol rules, i.e., $\sigma_n(t)=\sigma_{n'}(t)$ for all $n, n'$, for all $t$.\vspace{0.2cm}
\end{definition}

The next proposition shows that implementing DQSA algorithm using a common training avoids convergence to competitive SPEs as described in Theorem \ref{th:competitive_NE} Statement 2. To avoid trivial solutions, it is assumed that users are allowed to transmit with probability $1-\epsilon$ for small $\epsilon>0$ (otherwise if users always transmit with probability $1$, the reward equals zero on all channels).

\begin{proposition}
\label{prop:common}
Fix $K$, and assume that DQSA algorithm is implemented using a common training. Then, the probability that the algorithm will converge to competitive SPEs approaches zero as $N$ approaches infinity.\vspace{0.0cm}
\end{proposition}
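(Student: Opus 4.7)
The plan is to exploit the channel symmetry that common training forces on all users, and to rule out the asymmetric deterministic equilibria of Theorem~\ref{th:competitive_NE}, Statement~2 via a concentration/union-bound argument in $N$.

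First I would prove a channel-symmetry lemma: under common training the Q-updates in Section~\ref{ssec:training} are computed under the implicit assumption that every user plays the same strategy, and the $K$ channels enter the DQN input symmetrically (equal capacity entries, symmetric reward structure, symmetric ACK feedback). A symmetry-of-solutions argument on the fixed point of the Q-iteration then gives $Q(k)=Q(k')$ for any two transmitting actions $k,k'\in\{1,\ldots,K\}$. Combined with the softmax rule (\ref{eq:P_a}), every user therefore plays a common mixed strategy of the form $\sigma(t)=(p_0(t),p(t),\ldots,p(t))$ at every slot, with each transmitting channel receiving probability at most $p(t)\le(1-\epsilon)/K$ plus the uniform $\alpha/(K+1)$ exploration floor.

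Second, I would upper-bound the probability that the realized action profile coincides with any competitive SPE of Statement~2. Each such SPE is a deterministic pure assignment $(k_1,\ldots,k_N)$ with $\bigcup_{n}\{k_n\}\supseteq\{1,\ldots,K\}$, and there are at most $K^{N}$ such assignments. Under the channel-symmetric common policy the $N$ users draw actions independently, and each user matches any prescribed target channel with probability at most $q:=(1-\epsilon)/K+\alpha/(K+1)$. Thus the probability of realizing a specified SPE profile at a given slot is at most $q^N$, the probability of realizing some SPE profile at that slot is at most $(Kq)^N$, and a union bound over the horizon $T$ gives a total probability of at most $T(Kq)^N$. Choosing $\epsilon>0$ and $\alpha$ small enough so that $Kq=1-\epsilon+K\alpha/(K+1)<1$ (always possible for any $\epsilon>0$ and sufficiently small $\alpha$), the bound vanishes as $N\to\infty$ for fixed $K$.

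The main obstacle is rigorously establishing the channel-symmetry lemma for the actual DQN+LSTM architecture with stochastic training, rather than for an idealized tabular Q-iteration. The LSTM hidden state couples the learned Q-values to the specific realized history, and a single training run need not produce exactly channel-symmetric outputs. I would handle this either by averaging over random initializations and invoking equivariance of the training loss under permutations of channel indices, or by relaxing the claim to tolerate a vanishing perturbation around exact symmetry, in which case the per-slot hitting probability is still bounded by $(Kq+o(1))^N$ and the conclusion $T(Kq)^N\to 0$ is preserved. A secondary issue is interpreting ``convergence to an SPE'' for a softmax policy that never becomes exactly deterministic; I would fix this by requiring that the limiting joint distribution of the action profile assigns positive mass to the SPE, which the per-slot bound already precludes.
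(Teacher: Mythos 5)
There is a genuine gap, and it comes from not engaging with what ``common training'' actually buys. The paper's proof is a reward comparison made possible by the common-training assumption: because the Q-value of a candidate strategy is estimated under the hypothesis that \emph{all} users adopt it, the persistent-transmission strategies underlying the SPEs of Theorem~\ref{th:competitive_NE}, Statement~2 are evaluated with roughly $N/K$ users per channel, giving a per-slot reward of order $(1-\epsilon)\epsilon^{N/K-1}$, which is exponentially small in $N$; a common low-attempt-probability strategy (transmit with probability $K/N$ on a random channel) yields reward of order $Ke^{-1}/N$, which is only polynomially small. Hence the estimated Q-values favor reducing the attempt probability as $N$ grows, and the training is driven away from the competitive SPEs. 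Your argument never compares the values of different strategies, so it never uses the one feature of common training that breaks the ``race to probability one'' described after Theorem~\ref{th:competitive_NE}; without that comparison there is nothing to stop the trained network from learning a history-dependent rule such as ``keep transmitting on the channel where you last succeeded with probability $1-\epsilon$,'' which is exactly how the system would drift into a competitive SPE.

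Two further steps in your proposal would fail as written. First, the symmetry lemma gives at most \emph{equivariance} of the learned Q-function under permutations of channel indices applied jointly to the history and the action; it does not give $Q(\mathcal{H}_n(t),k)=Q(\mathcal{H}_n(t),k')$ for a fixed realized history. Once a user has transmitted on channel $k$ and received an ACK, its history is channel-asymmetric and the LSTM state can (and, per Section~\ref{sec:sim}, empirically does) concentrate the softmax in (\ref{eq:P_a}) on that channel. So the per-slot, per-channel bound $q\le(1-\epsilon)/K+\alpha/(K+1)$ holds at most at $t=1$; assuming it for all $t$ amounts to assuming the policy never concentrates, which is essentially the conclusion you are trying to prove, and the asymmetry induced by diverging histories is not a vanishing perturbation that your $(Kq+o(1))^N$ patch can absorb. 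Second, the union bound estimates the wrong event: convergence to a competitive SPE is a statement about the limiting \emph{strategy profile}, not about the realized action profile coinciding with a prescribed channel assignment at some slot. If the policy did converge to such an SPE, the realized profile would match the assignment with probability close to one rather than $(Kq)^N$, so a small value of $T(Kq)^N$ tells you nothing about whether convergence occurs.
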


\begin{proof}
Under any competitive SPE in Theorem \ref{th:competitive_NE} Statement 2 (when users are allowed to transmit with probability $1-\epsilon$ for small $\epsilon>0$), every user transmits on a single channel with probability $1-\epsilon$ at each given time. Let $N_k(t)$ be the number of users that transmit on channel $k$ at time $t$. As $N$ approaches infinity, $N_k(t)$ approaches $N/K$. Otherwise, users have an incentive to switch channels. As a result, the reward for each user approaches $(1-\epsilon)\epsilon^{N/K-1}$ for all $t$ which approaches zero exponentially fast with $N$. On the other hand, the reward for each user when applying a simple strategy in which every user transmits over a randomly selected channel with probability $K/N$ approaches $Ke^{-1}/N$. Thus, when applying a common training the Q values increase when decreasing the transmission probabilities as $N$ increases, which avoids convergence to competitive SPEs.\vspace{0.0cm}
\end{proof}

Proposition $1$ establishes an important design principle. It implies that implementing DQSA algorithm using a common training avoids the algorithm to reach highly inefficient operating points. Next, we characterize the Pareto optimal operating points of the system when $N>K$ (i.e., when SPEs are inefficient).

\begin{theorem}
\label{th:competitive_Pareto}
Assume that $N>K$ and set $r_n(t)$ as in (\ref{eq:reward_individual}). Then , the following strategy profile is Pareto optimal: for each time $t$, for every channel $k\in\left\{1, ..., K\right\}$ there exists a user $n_k(t)$, such that $p_{n_k(t),k}(t)=1$ and $p_{n',k}(t)=0$ for all $n'\neq n_k(t)$.
\vspace{0.0cm}
\end{theorem}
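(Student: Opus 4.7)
The plan is to prove Pareto optimality by showing that the proposed strategy profile achieves the maximum possible sum of expected rewards, so any deviation that strictly benefits one user must come at the cost of another. The key observation is that on any single time slot, at most $K$ successful transmissions can occur across the whole network (one per channel), so the per-slot contribution to $\sum_{n=1}^N r_n(t)$ is bounded by $K$. Consequently, for \emph{any} strategy profile $\tilde{\boldsymbol{\sigma}}$,
\beq
\sum_{n=1}^N \textbf{E}\left[R_n(\tilde{\boldsymbol{\sigma}})\right] \;=\; \sum_{t=1}^T \gamma^{t-1}\, \textbf{E}\!\left[\sum_{n=1}^N \textbf{1}_n(t-1)\right] \;\leq\; K \sum_{t=1}^T \gamma^{t-1}.
\eeq

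First I would verify that the proposed profile $\boldsymbol{\sigma}^*$ attains this upper bound. Under $\boldsymbol{\sigma}^*$, for each slot $t$ and each channel $k$, exactly one user $n_k(t)$ transmits on channel $k$ with probability one, and no other user transmits on $k$. Hence the transmission on channel $k$ is deterministically successful, giving $\textbf{E}\!\left[\sum_n \textbf{1}_n(t-1)\right] = K$ for every $t$, so $\sum_n \textbf{E}[R_n(\boldsymbol{\sigma}^*)] = K \sum_{t=1}^T \gamma^{t-1}$.

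Next, I would conclude Pareto optimality by contradiction. Suppose there exists a strategy profile $\tilde{\boldsymbol{\sigma}}$ and some user $m$ with $\textbf{E}[R_m(\tilde{\boldsymbol{\sigma}})] > \textbf{E}[R_m(\boldsymbol{\sigma}^*)]$ and $\textbf{E}[R_n(\tilde{\boldsymbol{\sigma}})] \geq \textbf{E}[R_n(\boldsymbol{\sigma}^*)]$ for all $n \neq m$. Summing these inequalities gives
\beq
\sum_{n=1}^N \textbf{E}[R_n(\tilde{\boldsymbol{\sigma}})] \;>\; \sum_{n=1}^N \textbf{E}[R_n(\boldsymbol{\sigma}^*)] \;=\; K \sum_{t=1}^T \gamma^{t-1},
\eeq
which contradicts the universal upper bound established above. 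Therefore no such Pareto-improving profile exists, and $\boldsymbol{\sigma}^*$ is Pareto optimal.

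I do not expect a serious obstacle here: the argument is essentially a conservation-of-throughput bound combined with the observation that $\boldsymbol{\sigma}^*$ saturates the orthogonality constraint. The only mildly delicate point is making sure the slot-wise cap $\sum_n \textbf{1}_n(t-1) \leq K$ is applied pathwise (so it carries through the expectation and the discounting) before the Pareto comparison, and being explicit that the comparison is over expected accumulated rewards as defined in (\ref{eq:objective}), which is the notion of reward used in the Pareto-optimality definition.
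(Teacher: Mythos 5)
Your proof is correct and follows essentially the same route as the paper's: both arguments rest on the fact that the total (discounted, summed-over-users) reward is capped at $K\sum_{t=1}^{T}\gamma^{t-1}$, that the collision-free profile $\boldsymbol{\sigma}^*$ attains this cap, and that therefore any profile strictly improving one user's reward must strictly decrease another's. Your version is slightly more careful in applying the per-slot bound pathwise and working with expectations explicitly, but the underlying argument is the same.
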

\begin{proof}
Let $\boldsymbol{\sigma}^*$ be the strategy profile defined by the theorem. Let $\boldsymbol{\sigma}'$ be a strategy profile in which user $n$ gets higher reward: $R_n(\boldsymbol{\sigma}')>R_n(\boldsymbol{\sigma}^*)$. We define the total reward for all users under any strategy profile $\boldsymbol{\sigma}$ by $S_R(\boldsymbol{\sigma})=\sum_{n=1}^N R_n(\boldsymbol{\sigma})$. Since there are no collisions under $\boldsymbol{\sigma}^*$, then the total reward for all users under $\boldsymbol{\sigma}^*$ is given by:
\beq
\displaystyle S_R(\boldsymbol{\sigma}^*)=K\sum_{t=1}^{T}\gamma^{t-1}. \vspace{0.0cm}
\eeq
Next, since $S_R(\boldsymbol{\sigma}^*)\geq S_R(\boldsymbol{\sigma}')$ and $R_n(\boldsymbol{\sigma}')>R_n(\boldsymbol{\sigma}^*)$, the total rewards for all users except user $n$ under $\boldsymbol{\sigma}^*$, and $\boldsymbol{\sigma}'$ satisfy:
\beq
\displaystyle S_R(\boldsymbol{\sigma}^*)-R_n(\boldsymbol{\sigma}^*)>S_R(\boldsymbol{\sigma}')-R_n(\boldsymbol{\sigma}'). \vspace{0.0cm}
\eeq
Hence, there exists a user $n'$ that receives a smaller reward when the system switches from $\boldsymbol{\sigma}^*$ to $\boldsymbol{\sigma}'$, $R_{n'}(\boldsymbol{\sigma}')<R_{n'}(\boldsymbol{\sigma}^*)$. Hence, $\boldsymbol{\sigma}^*$ is Pareto optimal.
\vspace{0.0cm}
\end{proof}

Theorem \ref{th:competitive_Pareto} implies that any strategy profile that shares resources without collisions among users is Pareto optimal. In Section \ref{sec:sim}, we implemented DQSA algorithm using a common training, and it is shown that users indeed avoid inefficient SPEs (as stated in Proposition \ref{prop:common}). Interestingly, it is shown that the users often reach (in about $80\%$ of the Monte-Carlo experiments) Pareto optimal strategies as characterized by Theorem \ref{th:competitive_Pareto} using only ACK signals. Although convergence of DRL to optimal strategies is an open question, the intuition for reaching Pareto optimal strategies can be explained as follows. Assume that a user has succeeded to learn well the system state from its history using the DQN (which occurs often since large-scale partially observed models can be represented well by the DQN). Since users use common training when updating their strategy, they tend to strategies that avoid collisions to increase the reward. Which one of the operating points is reached depends on the initial conditions and randomness of the algorithm.

\subsection{Cooperative Reward Maximization}
\label{ssec:cooperative}

In this section, we investigate the case in which every user in the system aims at maximizing the same global system-wide reward. Specifically, let
\vspace{0.0cm}
\beq
\label{eq:reward_cooperative_t}
\displaystyle \mbox{$r_n(t)=0$, for all $1\leq t\leq T-1$,} \vspace{0.0cm}
\eeq
and \vspace{0.0cm}
\beq
\label{eq:reward_cooperative_T}
\displaystyle r_n(T)=\sum_{n=1}^{N}f\left(\sum_{t=1}^{T}\textbf{1}_n(t-1)\right). \vspace{0.0cm}
\eeq
The function $f(x)$ can be designed so as to achieve a certain network utility. We focus here on the unified system-wide $\alpha$-fair utility function which is given by \cite{srikant2013communication}:
\beq
\label{eq:alpha_fair}
\displaystyle f(x)=\frac{x^{1-\alpha}}{1-\alpha},
\;\;\; \mbox{for}\;\;\; \alpha\geq 0.
\eeq

It should be noted that various well-known system-wide utility functions are special cases of the unified $\alpha$-fair utility function. For example, setting $\alpha=0$ results in maximizing the user sum-rate (since $f\left(x\right)=x$). Setting $\alpha=1$ results in maximizing the user sum log-rate, which is known as proportional fairness \cite{kar2004achieving} (since differentiating $f(x)-Const$, where $Const=1/(1-\alpha)$ and taking the limit as $\alpha\rightarrow 1$ yields $f\left(x\right)=\log(x)$).

Next, we characterize the operating points of the system under the cooperative utility function, which are fundamentally different from the operating points under the competitive reward setting.\vspace{0.2cm}

\begin{theorem}
\label{th:cooperative}
Set $r_n(t)$ as in (\ref{eq:reward_cooperative_t}), (\ref{eq:reward_cooperative_T}), (\ref{eq:alpha_fair}). Then , the following statements hold:\vspace{0.0cm}\\
1) Assume that $\alpha=0$ in (\ref{eq:alpha_fair}). Then, the following strategy profile is SPE and Pareto optimal: for each time $t$, for every channel $k\in\left\{1, ..., K\right\}$ there exists a user $n_k(t)$, such that $p_{n_k(t),k}(t)=1$ and $p_{n',k}(t)=0$ for all $n'\neq n_k(t)$.
\vspace{0.0cm}\\
2) Assume that $\alpha>0$ in (\ref{eq:alpha_fair}) and $KT/N\in\mathbb{N}$. Then, the following strategy profile is SPE and Pareto optimal: (i) for each time $t$, for every channel $k\in\left\{1, ..., K\right\}$ there exists a user $n_k(t)$, such that $p_{n_k(t),k}(t)=1$ and $p_{n',k}(t)=0$ for all $n'\neq n_k(t)$. (ii) Each user transmits during $KT/N$ time slots, i.e., $\sum_{t=1}^{T}\textbf{1}_n(t)=KT/N$ for all $n$.
\vspace{0.0cm}
\end{theorem}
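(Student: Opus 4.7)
The plan is to exploit the fact that, under the cooperative reward defined by (\ref{eq:reward_cooperative_t})--(\ref{eq:reward_cooperative_T}), every user receives the same scalar payoff $R_n = \gamma^{T-1}\sum_{m=1}^{N} f(x_m)$, where $x_m := \sum_{t=1}^{T}\textbf{1}_m(t-1)$ is user $m$'s total number of successful transmissions. Because the payoff is identical across users, Pareto optimality collapses to maximizing the common scalar $\sum_m f(x_m)$, and SPE collapses to showing that from every subgame no unilateral deviation strictly raises this quantity. Both statements will follow from this reduction.

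For Pareto optimality I would combine the deterministic slot-wise bound $\sum_m x_m \leq KT$, which holds because each slot admits at most one successful transmission per channel, with the structural properties of $f$. Under Statement~1 with $\alpha=0$, $f$ is linear, so $\sum_m f(x_m) \leq KT$ and the collision-free schedule in (i) attains equality. Under Statement~2 with $\alpha>0$, $f$ is strictly increasing and strictly concave on $(0,\infty)$ (the $\alpha=1$ case being the logarithm by continuity of (\ref{eq:alpha_fair})), so Jensen's inequality yields
\begin{equation*}
\sum_{m=1}^{N} f(x_m) \;\leq\; N\, f\!\left(\tfrac{1}{N}\sum_{m=1}^{N} x_m\right) \;\leq\; N\, f(KT/N),
\end{equation*}
with equality iff every $x_m = KT/N$ and no slot is wasted; the hypothesis $KT/N\in\mathbb{N}$ makes this feasible, and conditions (i)--(ii) realize it.

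For SPE I would fix an arbitrary history $\{\mathcal{H}_m(t-1)\}_{m=1}^N$, let $y_m := \sum_{s=1}^{t-1}\textbf{1}_m(s-1)$ denote pre-subgame successes, and let $z_m^*$ denote the future contributions under the continuation of the proposed profile. A slot-by-slot case analysis of any alternative continuation of user $n$ shows that each deviated slot either idles a channel scheduled for $n$, or collides $n$ into a channel assigned to some $n_{k'}(s)$, or does both simultaneously. In every case the per-slot perturbation is non-positive on every coordinate $z_m$; accumulating over deviated slots yields the pathwise monotonicity $z_m' \leq z_m^*$ for all $m$, strictly on at least one coordinate whenever the deviation is non-trivial. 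Monotonicity of $f$ then forces $\sum_m f(y_m + z_m') \leq \sum_m f(y_m + z_m^*)$ pathwise, strictly for any non-trivial deviation, and expectation over randomization preserves the inequality. The main obstacle is precisely the pathwise monotonicity step for \emph{multi-slot} deviations: one must verify that the per-slot decreases accumulate without any compensating gain elsewhere, which crucially relies on all users other than $n$ following the fixed deterministic schedule, so that $n$'s slot-level perturbations decompose additively across coordinates. Once this monotone coupling is established, the rest of the argument is a direct consequence of the monotonicity (and, in Statement~2, concavity) of $f$.
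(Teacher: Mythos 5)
Your proposal is correct and follows the same overall route as the paper: both arguments reduce the theorem to the static problem of maximizing $\sum_{n}f(x_n)$ subject to $\sum_n x_n\le KT$, show the proposed profile attains the maximum, and then argue that unilateral deviations in any subgame cannot help. The differences are in execution. For the optimization step the paper writes the Lagrangian of (\ref{opt_proof}) and reads off $x_n^{-\alpha}=\lambda$, concluding $x_1=\cdots=x_N=KT/N$ for $\alpha>0$ and an arbitrary partition for $\alpha=0$; you obtain the same conclusion by Jensen's inequality applied to the strictly concave $f$, which is a marginally more elementary and self-contained way to certify global optimality (the KKT condition alone is only sufficient here because of concavity, which the paper leaves implicit). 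You also make explicit a point the paper uses silently: since every user receives the identical scalar $\gamma^{T-1}\sum_m f(x_m)$, Pareto optimality is literally equivalent to maximizing that scalar. For the SPE part your treatment is actually more careful than the paper's: the paper asserts that ``any unilateral deviation results in collisions,'' which is not quite exhaustive (a deviation to not transmitting produces an idle slot, not a collision), whereas your slot-by-slot case analysis and the resulting coordinatewise monotonicity $z_m'\le z_m^*$ covers idling, colliding, and mixed multi-slot deviations, and correctly isolates the one place where the argument genuinely needs the other users to be following a fixed deterministic schedule. One minor point worth flagging if you write this up: for $\alpha\ge 1$ the utility $f(0)$ is $-\infty$, so the Jensen chain and the deviation bound should be read in the extended reals; this only strengthens the conclusion but deserves a sentence.
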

\begin{proof}
Let $x_n\triangleq\sum_{t=1}^{T}\textbf{1}_n(t-1)$. For proving both statements we first solve the following optimization problem:
\beq
\label{opt_proof}
\bea{l}
\displaystyle\max\;\;\sum_{n=1}^{N}\gamma^{T-1}\frac{x_n^{1-\alpha}}{1-\alpha}\;,
\displaystyle \;\;\;\mbox{s.t.}\;\;\;\; \sum_{n=1}^{N}x_n\leq KT.
\ena
\eeq
Note that (\ref{opt_proof}) maximizes the total reward that each user can get subject to constraint on the total number of transmissions in the network, which equals $KT$. The Lagrangian for the problem is given by:
\beq
\displaystyle L(\textbf{x},\lambda)=\sum_{n=1}^{N}\gamma^{T-1}\frac{x_n^{1-\alpha}}{1-\alpha}
-\lambda\left(\sum_{n=1}^{N}x_n-KT\right),
\eeq
for $\lambda\geq 0$. Differentiating with respect to $x_n$ yields $x_n^{-\alpha}=\lambda$ for all $n$. As a result, when $\alpha>0$, we have $x_1=x_2=\cdots=x_N=KT/N$.
When $\alpha=0$, we have $\sum_{n=1}^{N}x_n=KT$, so that any partition of $KT$ among users solves (\ref{opt_proof}).

Next, we prove the statements. We first prove Statement $1$. Since the strategy profile defined in Statement $1$ avoids collisions, then it satisfies the solution to (\ref{opt_proof}) under $\alpha=0$. Since any unilaterally deviation by a single user results in collisions, no user has an incentive to deviate at each subgame. Thus, the strategy profile is SPE. Also, we cannot increase the reward of any user by switching to another strategy profile (since it solves (\ref{opt_proof})). Thus, the strategy profile is also Pareto optimal.

Next, we prove Statement 2. Since the strategy profile defined in Statement $2$ avoids collisions and also partitions the time slots equally among users, then it satisfies the solution to (\ref{opt_proof}) under $\alpha>0$. Since any unilaterally deviation by a single user results in collisions, no user has an incentive to deviate at each subgame (although the total reward at each subgame (say at the remaining time slots $t_s+1, ..., T$) might not be optimal for the subgame since $\sum_{t=t_s+1}^{T}\textbf{1}_n(t)$ does not necessarily equal $K(T-t_s)/N$). Thus, the strategy profile is SPE. Also, we cannot increase the reward of any user by switching to another strategy profile under the total game (played at time slots $t=1, 2, ..., T$) since it solves (\ref{opt_proof}). Hence, the strategy profile is also Pareto optimal.\vspace{0.2cm}
\end{proof}

\begin{remark}
Theorem \ref{th:cooperative} implies that when $\alpha=0$, any strategy profile that avoids collisions and idle time slots is Pareto optimal and SPE. In Section \ref{sec:sim}, we trained the DQN with $\alpha=0$ (i.e., for maximizing the user sum rate). We observed that the proposed DQSA algorithm often reaches these strategies by learning from ACK signals only. Interestingly, the algorithm often converges to the simplest form of these strategies, in which only a subset of the users transmit for all $t=1, ..., T$. On the other hand, when $\alpha>0$, Theorem \ref{th:cooperative} implies that any strategy profile that avoids collisions and idle time slots, and also equally shares the time slots among users is Pareto optimal and SPE. In Section \ref{sec:sim}, we trained the DQN with $\alpha=1$ (i.e., for maximizing the user sum log-rate). We observed that the proposed DQSA algorithm often reaches these strategies as well by learning from ACK signals only. Although convergence of DRL to optimal strategies is an open question, the intuition for often reaching the desired strategies can be explained as follows. Assume that a user has succeeded to learn well the system state from its history using the DQN (which occurs often since large-scale partially observed models can be represented well by the DQN). Since all users receive the same global reward, they aim at maximizing the same global function (or potential function). Thus, selecting actions with high temperature in (\ref{eq:P_a}) converges to an operating point that maximizes the reward, resulting in Pareto optimal strategies according to Theorem \ref{th:cooperative}.
\end{remark}

\subsection{Maximizing a Global Utility with Aggregated Rewards}
\label{ssec:achieving}

When directly optimizing a global system-wide fairness utility, the reward for each user is no longer aggregated over time and depends on the common global utility, which is received by time $T$ (i.e., the end of the episode). However, it is well known that receiving delayed rewards decreases the training efficiency, and in our case the delay is the total time horizon. For handling this challenge, we exploit the inherent structure of the objective function to design a reward which is aggregated over time and approximates well the system-wide global utility when training the DQN. When the objective is the sum rate, this can be implemented by adding the sum of successfully transmitted packets at each given time for each user. When the objective is the sum log-rate (i.e., proportional fairness criterion), we use the harmonic number $H_n=\sum_{l=1}^n\frac{1}{l}$ as an approximation to $\log(n)$. From
\cite[pp. 73-75]{havil2003exploring}, we know that $\frac{1}{2(n+1)}<H_n-\log(n)-\gamma<\frac{1}{2n},$
where $\gamma$ is the Euler-Mascheroni constant. The bounds become tight for large $n$.
We define $M_n(t)$ as the number of successful transmissions by user $n$ until time $t$:
$
M_n(t)\triangleq\sum_{l=1}^t\textbf{1}_n(l).
$
Then, we can write:
\beq
\bea{l}
\displaystyle\sum_{n=1}^N\log(M_n(T)) \approx \sum_{n=1}^N \sum_{t=1}^{T}\frac{1}{M_n(t)}\textbf{1}_n(t) \vspace{0.1cm}\\ \hspace{3cm}
\displaystyle=\sum_{t=1}^{T}\sum_{n=1}^N \frac{1}{M_n(t)}\textbf{1}_n(t),
\ena
\eeq
where we used $\sum_{t=1}^{T}\frac{1}{M_n(t)}\textbf{1}_n(t)=\sum_{m=1}^{M_n(T)}\frac{1}{m}$ to replace the logarithm by the harmonic number. As a result, we obtain that every successful transmission by user $n$ at time $t$ contributes $1/M_n(t)$ to the total utility.
Using the above approximation, we can define the modified reward for the proportional fairness criterion as
\beq
r_n(t)\triangleq\sum_{n=1}^N\frac{1}{M_n(t)}\textbf{1}_n(t),
\eeq
for all $n=1, ..., N$, for all $t=1, ..., T$. Note that we use this modified reward only at the centralized training pase. In real-time, each user makes autonomous decisions based on ACK signals only. Using the above modified reward significantly improves performance in terms of achieving proportionally fair rates as demonstrated in Section \ref{sec:sim}.

\section{Experiments}
\label{sec:sim}
In this section, we present numerical experiments to illustrate the performance of the proposed DQSA algorithm. The simulations were implemented in Matlab. We simulated a wireless network consisting of $N$ users and $K$ orthogonal channels, as described in Section \ref{sec:network}, where $N$ varies between $3$ and $100$, and $K$ varies between $2$ and $50$ for different experiment settings. We simulated Rayleigh fading channels, with SNR$=35$dB, and channel bandwidth $B=20$MHz, when computing the data rate. The DQN includes LSTM layer with $100$ units, and two duelling layers of $10$ units ($10$ for A and $10$ for V). The minibatch size was set to $16$ episodes of $50$ time steps each. The discount factor was set to $\gamma=0.95$. We set $\alpha$ to $0.05$ at the beginning of the training and decreased it slowly to $0$. We increased the temperature $\beta$ slowly from $1$ at the beginning of the training up to $20$. We trained the network over $10,000$ iterations. To reduce the training complexity of the DQN, each user selected a channel from a set of two channels under DQSA. After training the DQN, we tested performance by averaging over $1000$ experiments of $100-200$ time slots. All the reported results obtained in a distributed manner given the trained DQN for each user.

The channel throughput under DQSA was compared to the classical slotted-Aloha protocol in Section \ref{ssec:sim_channel_throughput}. In Section \ref{ssec:sim_utility} we examined the achievable rate under various random access algorithms. In addition to DQSA algorithm, we simulated the following algorithms for comparison: (i) Opportunistic Channel Aware (OCA) protocol that uses channel state information for exploiting the channel diversity and access the channel with the highest achievable rate (irrespective of the collision rate) \cite{to2010exploiting, cohen2016distributedToN}; and (ii) Distributed Protocol (DP), that uses distributed learning by Gibbs sampler when selecting channels and transmission probabilities to converge to (nearly) optimal proportionally fair rates \cite{hou2014proportionally}.

\subsection{Complexity Comparison}
\label{ssec:sim_complexity}

In terms of overhead complexity of the protocols, both DP and OCA algorithms require frequent message exchanges between users. Once a user updates its transmission parameters (i.e., selected channel and transmission probability), it sends this information to its neighbors. This information is used to update the transmission parameters of other users in future iterations. By contrast, DQSA learns good policies from ACK signals only, and does not require those message exchanges between users, which becomes an important advantage in terms of reducing the protocol overhead.

In terms of computational complexity, all algorithms require $O(K)$ computations at each time a user updates its transmission parameters. The constant factor is the smallest under the OCA algorithm, since a user simply runs over an unsorted array of size $K$ (i.e., data rate for each channel) when selecting the channel with the highest rate. Then, it updates the transmission probability based on the information received from other users. The DP algorithm is slightly more involved. The user first multiplies the rate of each channel by the packet success probability (based on the information received from other users), then maps the resulting $K$ values to probability mass function (i.e., Gibbs distribution) over the channels, and finally draws the selected channel from this probability mass function. Under DQSA, the constant factor is significantly higher due to passing the observed input through the DQN (see a detailed complexity analysis in Section \ref{ssec:computational}). In our simulations, passing the input through the DQN requires $(2K+2)\times 100$ multiplications (due to 100 units LSTM layer) plus $2\times 100\times 10$ multiplications (due to 10 units Value and Advantage layers). Then, the $K+1$ Q-values are mapped to probability mass function of the Exp3 strategy (\ref{eq:P_a}), and finally the action is drawn from this probability mass function. A discussion on current developments of mobile devices that support computationally intense deep learning algorithms is provided in Section \ref{sec:conclusion}.

\subsection{Learning to Increase the Channel Throughput}
\label{ssec:sim_channel_throughput}

Since there is no coordination between users, inefficient channel utilization occurs when no user accesses the channel (referred to as idle time slots) or whether two or more users access the channel at the same time slot (i.e., collisions). The channel throughput is the fraction of time that packets are successfully delivered over the channel, i.e., no collisions or idle time slots occur. We simulated a network with disconnected cliques with a random number of users distributed uniformly between $3$ and $11$. At each clique, transmission is successful if only a single user in the clique transmits over a shared channel in a given time slot. There is no interference between users located at different cliques (e.g., uplink communication with scattered hotspots).
In this scenario, we compared the following schemes:
(i) \emph{The slotted Aloha protocol with optimal transmission probability:} In this scheme each user at clique $j$ transmits with probability $p_j$ at each time slot. Aloha-based protocols are widely used in wireless communication primarily because of their ease of implementation and their random nature. Setting $p_j=1/n_j$ is known to be optimal from both fairness (proportional fairness \cite{kar2004achieving}, max-min fairness) and Nash bargaining \cite{cohen2016distributedToN} perspectives. We assume that users set their transmission probability to the optimal value $p_j=1/n_j$ and computed the expected performance analytically as a benchmark for comparison.
(ii) \emph{The proposed DQSA algorithm:} We implemented the proposed algorithm, in which each user has the freedom to choose any transmission probability at each time slot. We implemented the DQSA algorithm by the competitive, sum-rate, and sum log-rate objectives as detailed in Section \ref{sec:analysis}.

We are interested to address the following question: Under slotted-Aloha, the expected channel throughput (conditioned on $n_j$) is given by $n_j p_j (1-p_j)^{n_j-1}=(1-1/n_j)^{n_j-1}\in(0.385,0.45)$ for $3\leq n_j\leq 11$ and decreases to $e^{-1}\approx 0.37$ as $n_j$ increases. \emph{We are thus interested to examine whether the users can effectively learn in a fully distributed manner only from their ACK signals how to access the channel so as to increase the channel throughput by reducing the number of idle time slots and collisions.} To make this question  more challenging, the actual number of users at each clique was unknown to the users when implementing the proposed algorithm (in contrast to the implementation of slotted-Aloha).

Figure \ref{figure_aloha_episode} provides a positive answer to this question for the experiments that we did. We point out that we do not use any coordination between users, message exchanges, etc. Therefore, the proposed algorithm starts from an aggressive strategy of frequent transmissions by the users to explore the system states, which is highly suboptimal. As a result, the performance improves drastically in the beginning of the algorithm due to the learning process, and significantly outperforms the slotted-Aloha protocol very quickly. \emph{The algorithm was able to deliver packets successfully almost $80\%$ of the time, about twice the channel throughput as compared to slotted-Aloha with optimal transmission probability. This is achieved when each user learns only from its ACK signals, without online coordination, message exchanges between users, or carrier sensing.} We point out that DQSA achieved almost $0.8$ channel throughput under the competitive reward, sum rate, and proportionally fair rates criterions. Thus, for the simplicity of presentation we present the DQSA performance under the competitive reward criterion in Fig. \ref{figure_aloha_episode}.

\begin{figure}[htbp]
\centering \epsfig{file=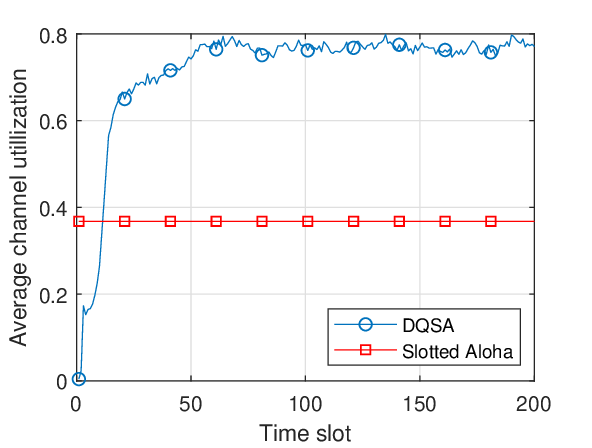,
width=0.48\textwidth}
\caption{Channel throughput for the experiments conducted in Section \ref{ssec:sim_channel_throughput}.}
\label{figure_aloha_episode}
\end{figure}

\subsection{Algorithm comparison Using Different Utility Functions}
\label{ssec:sim_utility}

Channel throughput is an important measure for communication efficiency, but it does not provide an indication about the desired performance among users. For example, if user $1$ transmits $100\%$ of the time and all other users receive rate zero, then the channel throughput is $1$, but the solution might be undesirable. Hence, in this section we are interested to address the following question: \emph{Can we train the DQN by different utilities so that the users can learn policies that result in good rate allocations depending on the desired performance?} In what follows, we provide a positive answer to this question for the experiments that we did.

We first simulated the case where $100$ users share $50$ channels. In Fig. \ref{fig:avg_rate} we also incorporated maximal Doppler shift of $100$Hz. It can be seen in \ref{fig:avg_rate} that DQSA algorithm achieves strong performance in terms of average user rate under all objective functions as desired. In Fig. \ref{fig:avg_log_rate} we present the average log-rate under various algorithms to measure performance in terms of the proportional fairness criterion. It can be seen that DQSA algorithm achieves the best performance under the proportional fairness objective, as desired, and outperforms both DP and OCA algorithms. Note that DQSA achieves poor average log-rate under the competitive and sum-rate criterion as desired since those objectives do not aim to achieve proportionally fair rates.

\begin{figure}[htbp]
\centering \epsfig{file=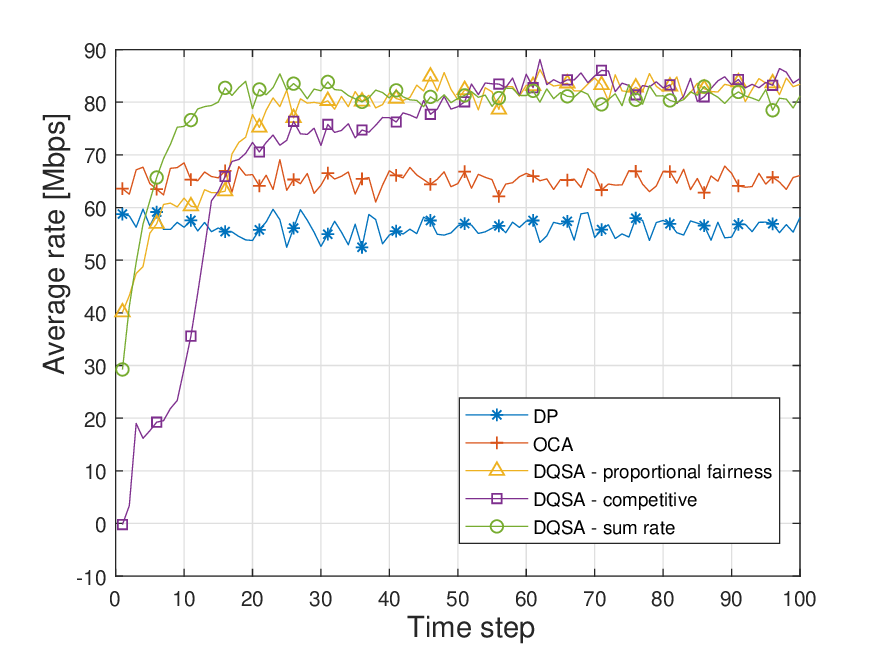,
width=0.48\textwidth}
\caption{Average user rate as a function of time under various algorithms. A case of $100$ users that share $50$ channels.}
\label{fig:avg_rate}
\end{figure}

\begin{figure}[htbp]
\centering \epsfig{file=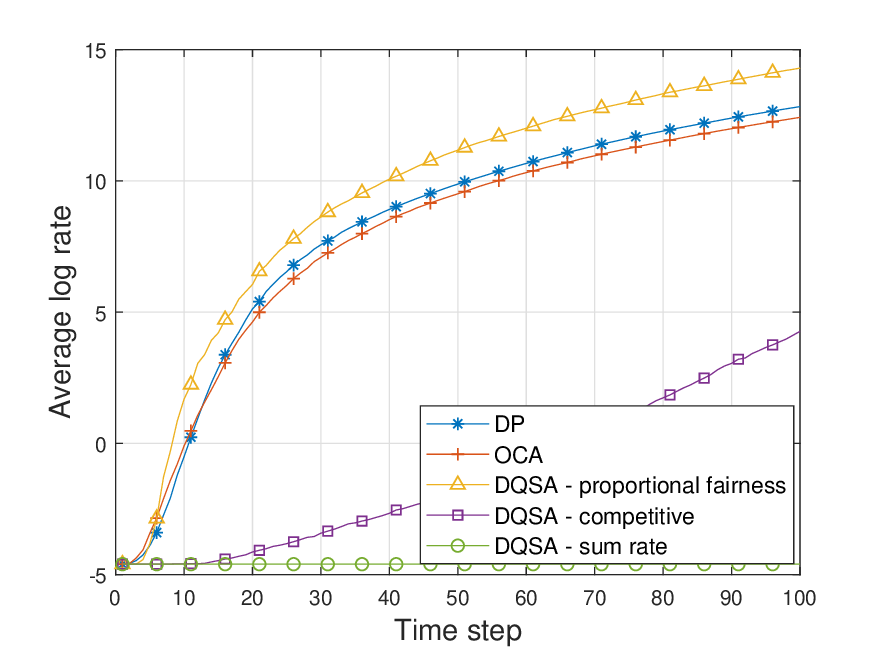,
width=0.48\textwidth}
\caption{Average user log-rate as a function of time (i.e., proportionally fair rates) under various algorithms. A case of $100$ users that share $50$ channels.}
\label{fig:avg_log_rate}
\end{figure}

We next investigate the case where $100$ secondary users and $50$ primary users share $50$ channels. The primary users activity is modeled by Markovian processes as commonly assumed in the literature \cite{zhao2007decentralized, Ahmad_2009_Optimality, Liu_2010_Indexability, Wang_2012_Optimality, Tekin_2012_Approximately, Liu_2013_Learning, cohen2014restless, gafni2018learningISIT}. Specifically, we assume that each channel follows an external ON/OFF Markov process due to primary users activities, i.e., each channel is ON when a primary user does not transmit on it, or OFF otherwise, with a stable probability to be ON set to $0.5$. Fig. \ref{fig:primary} shows that DQSA algorithm significantly outperforms the other algorithms in this scenario as well.

\begin{figure}[htbp]
\centering \epsfig{file=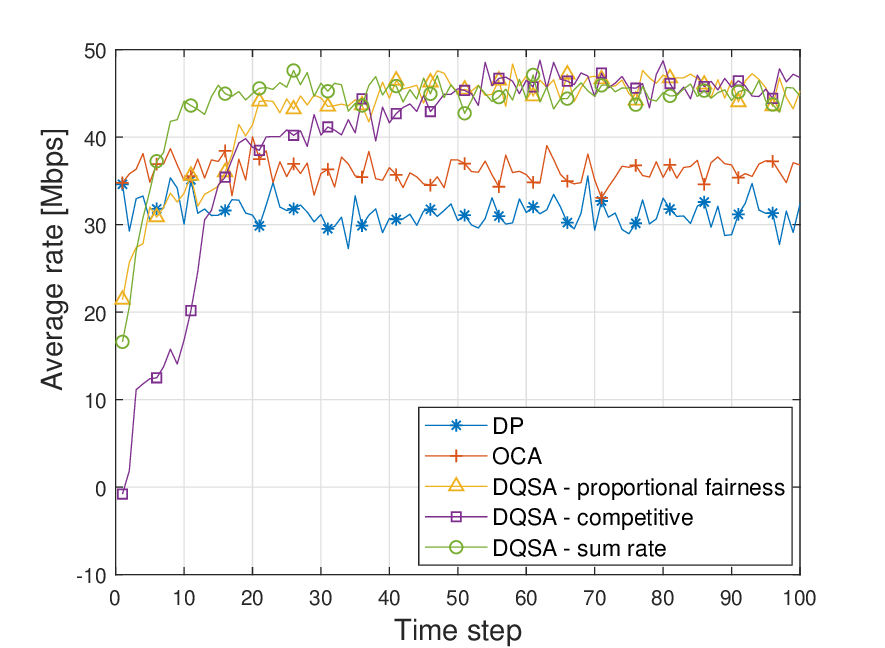,
width=0.48\textwidth}
\caption{Average secondary user rate as a function of time under various algorithms. A case of $100$ secondary users and $50$ primary users that share $50$ channels.}
\label{fig:primary}
\end{figure}

The experimental results support the following insights:

(i) The performance of DQSA algorithm under the competitive and proportional fairness objectives are clearly better than the performance of DQSA algorithm under the user sum-rate objective from a fairness perspective. This result is desirable since maximizing the user sum rate can be achieved by letting $K$ users to always transmit over $K$ channels, and the rest $N-K$ users receive zero rate. Indeed, these types of solutions perform poorly from a fairness perspective, since the sum log-rate tends to minus infinity. Under the competitive reward, however, each user tries to reach a good operating point so as to maximize its own rate. Which one of the users succeeds better is affected by the initial conditions and randomness of the algorithm. Therefore, an improvement in performance from a fairness perspective is expected, as observed in Fig. \ref{fig:avg_log_rate}. Under the proportional fairness objective, the users aim to equally share the channels for maximizing the objective function, as supported by Theorem \ref{th:cooperative}, and the high average log-rate demonstrated in fig. \ref{fig:avg_log_rate}.

(ii) From a game theoretic perspective, the SPE of the competitive game is reached when each user transmits with probability $1$ at each time slot, as shown by Theorem \ref{th:competitive_NE}, which is highly inefficient. Thus, implementing the DQSA algorithm using a common training yields a tremendous improvement in this respect, as can be seen in Figs. \ref{figure_aloha_episode}, \ref{fig:avg_rate}.

(iii) Finally, in about $80\%$ of the Monte-Carlo experiments we observed that DQSA algorithm converged to Pareto optimal resource sharing solutions, as analyzed in Section \ref{sec:analysis}. Specifically, under the sum rate objective, we often observed convergence to solutions in which only a subset of the users transmits during the entire time horizon. Since each user contributes equally to the user sum rate, the users often learn this simple and efficient policy that achieves this goal. This observation is demonstrated by high channel throughput in Fig. \ref{figure_aloha_episode} and high average rate in Fig. \ref{fig:avg_rate}, but poor average log-rate in Fig. \ref{fig:avg_log_rate}. Under the competitive reward objective, we often observed convergence to solutions in which collisions and idle time-slots are avoided, which is Pareto optimal. The users share the channels unequally but no user receives zero rate, due to the competitive nature of the reward. Which one of the users receives higher rate is affected by the initial conditions and randomness of the algorithm. This observation is demonstrated by high channel throughput in Fig. \ref{figure_aloha_episode}, high average rate in Fig. \ref{fig:avg_rate}, and a finite average log-rate in Fig. \ref{fig:avg_log_rate}. Under the proportional fairness objective, we often observed convergence to solutions in which collisions and idle time-slots are avoided, and users (nearly) equally share the channels during the time horizon, which is Pareto optimal. This observation is demonstrated by high channel throughput in Fig. \ref{figure_aloha_episode}, high average rate in Fig. \ref{fig:avg_rate}, and high average log-rate in Fig. \ref{fig:avg_log_rate}. These results demonstrate the strong performance of the DQSA algorithm and its capability to adapt to different problem settings.

\section{Conclusion}
\label{sec:conclusion}

The problem of dynamic spectrum access for network utility maximization in multichannel wireless networks was considered. We developed a novel distributed dynamic spectrum access algorithm based on deep multi-user reinforcement leaning, referred to as Deep Q-learning for Spectrum Access (DQSA). The proposed algorithm enables each user to learn good policies in an online and distributed manner, while dealing with the large state space without online coordination or message exchanges between users. Analysis of the system dynamics is developed for establishing design principles for the implementation of the DQSA algorithm. Experimental results demonstrated strong performance of the algorithm in complex multi-user scenarios.

It should be noted that the need for more efficient hardware acceleration of AI algorithms has been recognized by academia and industry in recent years, and currently big semiconductor companies, and startups develop chips for mobile devices that support computationally intense deep learning algorithms with low-power consumption. Hence, DRL-based algorithms has a great potential for providing effective solutions to real-world complex DSA challenges in practice. Future research direction that we intend to pursue in this respect is to develop creative hardware implementations for the proposed algorithm.

\bibliographystyle{ieeetr}

\end{document}